\newcommand{\gi}[1]{{^*#1}}
\newcommand{\tens}[1]{\mathsf{#1}}
\renewcommand{\vec}[1]{\mathbf{#1}}
\newtheorem{theorem}{Theorem}
\newtheorem{definition}[theorem]{Definition}
\newtheorem{corollary}[theorem]{Corollary}
\newtheorem{proposition}[theorem]{Proposition}
\theoremstyle{remark}
\newtheorem{remark}[theorem]{Remark}
\begin{document}

%
%
%
%
%
%
%

\title{\LARGE A Note On Galilean Invariants In Semi-Relativistic Electromagnetism}
\author{Yintao Song\thanks{Email: \href{mailto:ytsong@umn.edu}{ytsong@umn.edu}.}\\
\normalsize\it Department of Aerospace Engineering and Mechanics, University of Minnesota, \\
\normalsize\it Minneapolis, Minnesota 55455, USA}
\date{\vspace{-2em}}
\maketitle

\begin{abstract}
The incompatibility between the Lorentz invariance of classical electromagnetism and the Galilean invariance of continuum mechanics is one of the major barriers to prevent two theories from merging. In this note, a systematic approach of obtaining Galilean invariant field variables and equations of electromagnetism 
within the semi-relativistic limit is reviewed and extended. 
In particular, the Galilean invariant forms of Poynting's theorem and the momentum identity, two most important electromagnetic identities in the thermomechanical theory of continua, are presented. In this note, we also introduce two frequently used stronger limits, namely the magnetic and the electric limit.
The reduction of Galilean invariant variables and equations within these stronger limits are discussed.
\end{abstract}

\section{Introduction}
Recently, the rapidly growth of many technological applications involving both mechanical and electromagnetic properties of materials,
such as MEMS, elastic dielectrics, and piezoelectric materials, 
stimulates strongly the interests in the field theory of thermomechanical continua interacting with classical electromagnetism \cite{Hutter, Kovetz, Fosdick, Ericksen2005, Ericksen2007, Ste2009, Liu2013}. 
The subject was initiated by the theories of elastic dielectric \cite{Toupin} and magnetoelasticity \cite{Brown1966}.
Later, many efforts \citep{Pen1967, Truesdell_Toupin, Maugin1988, Tiersten, Hutter, Kovetz, Ericksen2007, Fosdick, Ste2009, Liu2013} have been made to a fully unified theory concerning electricity, magnetism and mechanics all together.

However, until now the basic formulation of the integrated theory is still lack of a universally accepted version. One of the major barriers to the fusion of electromagnetism and thermomechanics of continua, as pointed out by \citet{Hutter, Fosdick} and others, is the complexity of addressing the issue of space-time invariance. This issue becomes particularly significant when the velocity field of particles in materials cannot be neglected. Continuum mechanics is required to be invariant (covariant) under Galilean transformations of the three dimensional Euclidean space, while classical electromagnetism is Lorentz invariant in the four dimensional Minkowski spacetime. The physical variables, such as electric field, magnetization and the Lorentz force, directly adopted from electromagnetism are not Galilean invariant. Consequently, physical laws and constitutive relations in terms of these variables will have different forms for observers doing measurements in different Galilean inertial frames. 

A popular strategy of addressing this issue is to introduce the Galilean invariant forms of various field variables in electromagnetism.
Such Galilean invariant forms (Galilean invariants, partial potentials) have been discussed in detail in the books of \citet{Hutter} and \citet{Kovetz} among others.
Other authors, such as \citet{Tiersten}, \citet{Fosdick}, have also carefully studied the transformation rules of field variables in electromagnetism
under Galilean transformations of the Euclidean space, without explicitly mentioning the notion of Galilean invariant forms.
It is useful to have a clear summary of these Galilean invariant forms, because, using \citet{Truesdell_Toupin}'s words,
\begin{quote}
``In most elementary and even advanced texts on electromagnetic theory, a clear distinction is not made between the partial
potentials $\mathfrak D$ and $\mathfrak H$ and the resultant potential $\mathbf D$ and $\mathbf H$ in the discussion of
polarizable and magnetizable media.''
\end{quote}
The present note aims to give a systematic review and some extensions of the Galilean invariant forms of field variables and equations in classical electromagnetism
within the semi-relativistic limit (defined later).

Once the Galilean invariants are determined, all the equations in classical electromagnetism can be rewritten in terms of 
them rather than the original non-objective field variables, as we will shown later in this note and also can be found in the literature
\citep{Pen1967, Truesdell_Toupin, Maugin1988, Tiersten, Hutter, Kovetz, Fosdick, Ericksen2005, Ericksen2007, Ste2009}. 
What's more, this enables us to use \emph{only} Galilean invariant variables in constitutive relations.
For example, a dielectric material often has the constitutive relation like $\vec P = \chi\vec E$, where $\vec P$ is the polarization
and $\vec E$ is the electric field strength. $\chi$ is a material constant to be determined by experimental characterization.
It has been noticed since the time of \citet{Lorentz} that this constitutive relation is not Galilean invariant.
Thus, some researchers, for example \citet{Landau_Continuous, Truesdell_Toupin}, have postulated an alternative way of writing this constitutive relation that is $\vec P = \chi\gi{\vec E}$, where $\gi{\vec E} = \vec E + \vec v\times\vec B$, $\vec v$ is the velocity of particles in the material 
and $\vec B$ is the magnetic field strength. 
It was believed that this new form is Galilean invariant when $|\vec v|$ is small (we will give a more precise meaning of ``slow'' later).
Since the new $\chi$ is the same as the old $\chi$ which can be determined by the same static or quasi-static experimental characterization,
this treatment introduce no extra difficulty to the material constants determination. 
All it requires is a new set of governing equations that is written in terms of $\gi{\vec E}$ in stead of $\vec E$, 
which is one of the main task of studying the Galilean invariant formulation of electromagnetism, 
and which is also what we try to establish in this note. 
$\gi{\vec E}$ will be shown to be the Galilean invariant form of $\vec E$.

In the above example, $\vec P$ is presumed to be Galilean invariant, 
which is actually true only within the non-relativistic limit, but not within the semi-relativistic limit.
The terminology of non- and semi-relativity is borrowed from \citet{Hutter}. 
According to \citet{Hutter}, non-relativistic limit means that ``in MKSA-units terms containing a $c^{-2}$-factor are neglected''.
Here, $c$ is the speed of light in vacuum.
If ``terms of order $V^2/c^2$ are neglected ($V=$ velocity of particle in the body), while those containing $c^{-2}$-factor are kept, 
we call such approximations semi-relativistic''.
Even though various existing theories can be nicely unified within in the non-relativistic limit, as proven by \citet[Chapter 3]{Hutter},
the non-relativistic limit does not have a convincing physical meaning, because physical laws should not
depend on the choice of unit system. The exact form of a term may be different in different unit systems, 
but the significance of such term should be the same in all unit systems.
The semi-relativistic limit is therefore a more proper approximation of classical electromagnetism when the velocity field is small but not negligible.
In this note, we focus on the semi-relativistic limit.
The purpose of this note is two-fold:
\begin{enumerate}[(i)]
\item It is easier to form a Galilean invariant constitutive model by Galilean invariant field variables.
\item It is even better to have governing equations also in terms of these Galilean invariant field variables.
\end{enumerate}

\section{Terminologies}
We summarize our terminology in this section. In this note, we use the indices notation for four dimensional tensors (including rank-1 tensors, {\it i.e.} vectors) and the direct notation
for three dimensional ones. By default, a three dimensional vector is represented by a 3-by-1 column vector.
The frequently used field variables are listed in Table~\ref{tbl:variables}.
\begin{table}[h]
\centering
\small
\caption{Frequently used field variables and constants in classical electromagnetism}\label{tbl:variables}
\begin{threeparttable}[b]
\begin{tabular}{ccl|ccl}
\hline\hline
notation & dimension & description & notation & dimension & description \\
\hline
$\phi$ & $\mathbb R$ & scalar potential & $\vec A$ & $\mathbb R^3$ & vector potential\\
$\rho$ & $\mathbb R$ & charge density & $\vec J$ & $\mathbb R^3$ & current density\\
$\rho_{\rm f}$ & $\mathbb R$ & free charge density & $\vec J_{\rm f}$ & $\mathbb R^3$ & free current density\\
$\vec E$ & $\mathbb R^3$ & electric field strength & $\vec B$ & $\mathbb R^3$ & magnetic field strength\\
$\vec D$ & $\mathbb R^3$ & electric displacement & $\vec H$ & $\mathbb R^3$ & magnetizing force\\
$\vec P$ & $\mathbb R^3$ & electric polarization & $\vec M$ & $\mathbb R^3$ & magnetization\\
$\epsilon_0$ & constant & vacuum permittivity\tnote{[1]} & $\mu_0$ & constant & vacuum permeability\tnote{[1]}\\
\hline\hline
\end{tabular}
\begin{tablenotes}
\item [{[1]}] $\epsilon_0$ and $\mu_0$ satisfy $\epsilon_0\mu_0c^2 = 1$.
\end{tablenotes}
\end{threeparttable}
\end{table}

In this note, we define the gradient and the curl of an arbitrary three dimensional vector field $\vec f$, and the divergence of a three dimensional tensor (of rank 2) field $\tens F$ as
\[
\begin{array}{ll}
\text{the gradient of a vector: } & (\nabla\vec f)_{ij} = \partial_if_j, \\
\text{the curl of a vector: } & (\nabla\times\vec f)_i = \epsilon_{ijk}\partial_jf_k, \\
\text{the divergence of a tensor: } & (\nabla\cdot\tens F)_i = \partial_jF_{ji}.
\end{array}
\]
Above, $\epsilon_{ijk}$ is the third-order Levi-Civita permutation operator. For three dimensional vectors and tensors, we do not
distinguish the superscript and subscript indices.

\begin{definition}\label{gtrans}
A \emph{Galilean transformation} of the three dimensional Euclidean space, $\mathbb R^3$, is the composition of a uniform translation, $\vec c\in\mathbb R^3$, and a rigid body rotation, $\tens Q\in SO(3)$. Here, $SO(3)$ is the set of all matrix representations of proper rotations. A point whose position in the original frame is $\vec x$ has the coordinate in the transformed frame $\vec x'$, which is
\begin{equation}
\vec x' = \tens Q \vec x + \vec c.
\end{equation} 
We denote such a transformation $\mathscr G(\tens Q, \vec c)$.
\end{definition}

\begin{definition}\label{def:ginv}
A scalar field $g$, a vector field $\vec g$ or a (second-rank) tensor field $\tens G$ in the three dimensional Euclidean space, $\mathbb R^3$, is called a Galilean invariant (contravariant) scalar, vector or tensor field, or simply a \emph{Galilean invariant}, if it transforms under a Galilean transformation, $\mathscr G(\tens Q,\vec c)$, according to the rules
\begin{subequations}\begin{align}
\label{ginv-1}
g'(\vec x') &= g(\vec x),\\ 
\label{ginv-2}
\vec g'(\vec x') &= \tens Q\vec g(\vec x),\\ 
\label{ginv-3}
\tens G'(\vec x') &= \tens Q\tens G(\vec x)\tens Q^T,
\end{align}\end{subequations}
where $\tens Q^T$ is the transpose of $\tens Q$. A formula (equation) is called Galilean invariant, if its form remains no change under any Galilean transformation. 
A Galilean invariant theory is a theory in which all formulas are Galilean invariant.
\end{definition}

\begin{definition}\label{spacetime}
\emph{Minkowski spacetime} is a four dimensional vector space equipped with the \emph{Minkowski metric}
\begin{equation}\label{mmetric}
\eta_{\mu\nu} = 
\begin{pmatrix}
1 & 0 \\  0 & - \mathbb{I}
\end{pmatrix},
\end{equation}
where $\mathbb I\in\mathbb R^{3\times3}$ is the 3-by-3 identity matrix. A point in the spacetime, also called an \emph{event}, is represented by
\begin{equation}
\mathcal X^\mu = \begin{pmatrix}
ct \\ \vec x
\end{pmatrix} \equiv \{ct; \vec x\},
\end{equation}
where $c$ is the speed of light in vacuum, $t\in\mathbb R$ and $\vec x = \begin{pmatrix}
x_1 \\ x_2 \\ x_3
\end{pmatrix} \in \mathbb R^3$. 
\end{definition}

\begin{remark}
$\mathcal X^\mu$ denotes also the \emph{four-displacement} from the origin to the event.
\end{remark}

\begin{definition}\label{def:ltrans}
A \emph{Lorentz transformation} of the Minkowski spacetime is represented by a 4-by-4 matrix ${\Lambda^\mu}_\nu$ satisfying
\begin{equation}\label{ltransreq}
\eta_{\mu\nu}{\Lambda^\mu}_\alpha {\Lambda^\mu}_\beta = \eta_{\alpha\beta},
\end{equation}
such that an event whose four-displacement in the original frame is $\mathcal X^\mu$ has the coordinate in the transformed frame, $\mathcal X'^\mu$, as following:
\begin{equation}
(\mathcal X')^\mu = {\Lambda^\mu}_\nu \mathcal X^\nu,
\end{equation} 
We denote such a transformation $\mathscr L({\Lambda^\mu}_\nu)$.
\end{definition}

\begin{definition}\label{def:linv}
A scalar field $\psi$, a four-vector field $g^\mu$ or a four-tensor field $\mathcal G^{\mu\nu}$ in the Minkowski spacetime is called a Lorentz invariant (contravariant) scalar, four-vector or four-tensor field, or simply a \emph{Lorentz invariant}, if it transforms under a Lorentz transformation, $\mathscr L({\Lambda^\mu}_\nu)$, according to the rules
\begin{subequations}\begin{align}
\label{linv-1}
\psi'(\mathcal X') &= \psi(\mathcal X), \\
\label{linv-2}
{g'}^{\mu}(\mathcal X') & = {\Lambda^\mu}_\nu g^\nu(\mathcal X), \\
\label{linv-3}
{\mathcal G'}^{\mu\nu}(\mathcal X') &= {\Lambda^\mu}_\alpha{\Lambda^\nu}_\beta\mathcal G^{\alpha\beta}(\mathcal X).
\end{align}\end{subequations}
A formula (equation) is called Lorentz invariant (or covariant), if its form remains no change under any Lorentz transformation. A Lorentz invariant theory is a theory in which all formulas are Lorentz invariant.
\end{definition}

\begin{remark}
If a vector field $g^\mu$ and a tensor field $\mathcal G^{\mu\nu}$ are Lorentz invariant, their covariant counterpart $g_\mu$ and $\mathcal G_{\mu\nu}$ satisfy
\begin{subequations}\begin{align}
g_\mu &= {\Lambda_\mu}^\nu g_\nu,\\
{\mathcal G'}_{\mu\nu}(\mathcal X') &= {\Lambda_\mu}^\alpha{\Lambda_\nu}^\beta\mathcal G_{\alpha\beta}(\mathcal X),
\end{align}\end{subequations}
where ${\Lambda_\mu}^\nu = \left({\Lambda^\nu}_\mu\right)^T$.
\end{remark}

Let us finishing this section by the definition of the semi-relativistic limit, a core concept of the present note.
\begin{definition}\label{def:semi-rel}
A physical problem in Minkowski spacetime is considered to be within the \emph{semi-relativistic limit}, if the following three conditions are satisfied:
\newcounter{saveenum}
\begin{enumerate}[(I)]
\item All measurements are done in the set of inertial frames moving relative to each other with velocities, $\vec u$, such that $|\vec u|/c\lesssim\epsilon$ for any pair of frames.
\label{I}
\item In any inertial frame, the velocity field of particles (field sources) satisfies $|\vec v|/c\lesssim\epsilon$.
\label{II}
\setcounter{saveenum}{\value{enumi}}
\end{enumerate}
We denote the typical value of a physical variable by [\textbullet]. Let $[x]$ and $[t]$ be the length and time scales of the problem so that for any scalar, vector or tensor field $f$ under consideration, 
\begin{equation}\label{sr-cons}
[\partial_t^m\nabla^n f]\sim[f]/[t]^m[x]^n,
\end{equation}
for every $0\leqslant m,n<+\infty$, where $\partial_t$ and $\nabla$ are respectively the time and spatial derivative operators.
Clearly, in the inertial frame under focus, $[v]\sim[x]/[t]$, and
\begin{enumerate}[(I)]
\setcounter{enumi}{\value{saveenum}}
\item the length and time scales of the problem under interest satisfy $[x]\lesssim\epsilon c[t]$.
\label{III}
\setcounter{saveenum}{\value{enumi}}
\end{enumerate}
Here, $\epsilon<1$ is such a small number that the tolerance of the theory is larger than $\mathcal O(\epsilon^2)$, $i.e.$ terms of order $\mathcal O(\epsilon^2)$ can be dropped.
\end{definition}
\begin{remark}
It may look like that the condition~\eqref{II} is just a consequence of \eqref{III}, but actually it is not. The conditions \eqref{I} and \eqref{II} are the determination of the small parameter $\epsilon$, while \eqref{III} is a constrain on other field variables and their derivatives to be suitable for a semi-relativistic setup.
\end{remark}
\begin{remark}\label{rmk:1}
Let $[\partial_t f]$ and $[\nabla f]$ represent respectively the typical value of their components in the time and spatial derivatives. Thus, if $\vec f$ is a vector field, we think $[\nabla f]\equiv[\nabla\vec f]\sim[\nabla\times\vec f]\sim[\nabla\cdot\vec f]$. If $\tens F$ is a tensor field, $[\nabla F]\equiv[\nabla\tens F]\sim[\nabla\cdot\tens F]$.
\end{remark}
\begin{remark}
A problem in which the constrain \eqref{sr-cons} cannot be satisfied for all field variables and for a single pair of $[x]$ and $[t]$ is considered to be \emph{multiscale}.
The Galilean invariant formulation of electromagnetism in multiscale problems will not be covered by the present note.
\end{remark}
\begin{corollary}
Two useful corollaries derived from Definition~\ref{def:semi-rel} are
\begin{enumerate}[(I)]
\setcounter{enumi}{\value{saveenum}}
\item for any scalar, vector or tensor field $f$ within the semi-relativistic limit, the scales of its time and spatial derivatives satisfy $[\partial_t f]  \lesssim \epsilon c [\nabla f]$,
 at almost every instance and almost everywhere.
\label{IV}
\item for any scalar, vector or tensor field $f$ within the semi-relativistic limit, $[\partial_t f]  \sim [\nabla f][v] \sim [f][\nabla v]$,
 at almost every instance and almost everywhere.
\label{V}
\end{enumerate}
\end{corollary}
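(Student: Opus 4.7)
The plan is to derive both estimates as direct algebraic consequences of the scaling relation \eqref{sr-cons} combined with conditions \eqref{II} and \eqref{III} of Definition~\ref{def:semi-rel}. The workhorse is specialization of \eqref{sr-cons}: setting $(m,n)=(1,0)$ gives $[\partial_t f]\sim [f]/[t]$, and setting $(m,n)=(0,1)$ gives $[\nabla f]\sim [f]/[x]$ (the latter being interpreted in the sense of Remark~\ref{rmk:1}, so that gradient, curl and divergence of vector or tensor fields all share the same typical magnitude). Taking the ratio and invoking the stated identity $[v]\sim [x]/[t]$ for the inertial frame under focus, one obtains $[\partial_t f]\sim [v]\,[\nabla f]$, which is already one of the two equivalences asserted in (V).

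To finish (V), I would apply \eqref{sr-cons} once more with the velocity field playing the role of $f$, obtaining $[\nabla v]\sim [v]/[x]$. Multiplying by $[f]$ then yields $[f]\,[\nabla v]\sim [f][v]/[x]\sim [f]/[t]\sim [\partial_t f]$, which chains onto the previous equivalence and closes the chain of $\sim$ in (V).

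For (IV), I would substitute either condition \eqref{II} (which gives $[v]\lesssim \epsilon c$, using $[v]$ as shorthand for the typical magnitude of the particle velocity field in the chosen inertial frame) or equivalently condition \eqref{III} ($[x]\lesssim \epsilon c\,[t]$) into the relation $[\partial_t f]\sim [v]\,[\nabla f]$ established above. Either substitution immediately produces $[\partial_t f]\lesssim \epsilon c\,[\nabla f]$, which is (IV).

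I do not anticipate any genuine obstacle: the corollaries are pure order-of-magnitude bookkeeping built on top of the scaling hypothesis already baked into Definition~\ref{def:semi-rel}. The only mild delicacy is the qualifier \emph{at almost every instance and almost everywhere}, which merely acknowledges that \eqref{sr-cons} is an order-of-magnitude identity that can be violated on lower-dimensional singular sets (for instance, isolated zeros or level surfaces of $\partial_t f$); the scaling argument stays valid off any such exceptional set, and no additional hypothesis is required.
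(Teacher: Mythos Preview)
Your proposal is correct and follows essentially the same approach as the paper: the paper's own proof is the one-line remark ``Using the fact that $[\partial_tf] \sim [f]/[t]$, $[\nabla f]\sim[f]/[x]$, $[v]\sim[x]/[t]$ and conditions given in Definition~\ref{def:semi-rel}.'' You have simply unpacked this bookkeeping in more detail, including the extra step $[\nabla v]\sim[v]/[x]$ needed to close the chain in (V), which the paper leaves implicit.
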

\begin{proof}
Using the fact that $[\partial_tf] \sim [f]/[t]$, $[\nabla f]\sim[f]/[x]$, $[v]\sim[x]/[t]$ and conditions given in Definition~\ref{def:semi-rel}.
\end{proof}

\section{Galilean invariants of a Lorentz tensor}

The difference between non- and semi-relativistic Galilean invariant form of electromagnetic variables is best illustrated
by the Galilean invariant form of $\vec B$ field.
In the non-relativistic limit $\gi{\vec B} = \vec B$ \citep{Pen1967, Hutter, Kovetz, Ste2009},
while in the semi-relativistic limit $\gi{\vec B} = \vec B - \vec v\times\vec E/c^2$ \citep{Hutter, Tiersten}.
Obviously, two choices are equivalent within the non-relativistic limit, but 
such limit has litter physical meaning, as discussed before.
Actually, $\gi{\vec B}=\vec B$ can be recovered in another limiting case even in the semi-relativistic setup, namely the \emph{magnetic limit},
which will be introduced in Sect.~\ref{sec:limits}.
The discrepancy between the two choices of $\gi{\vec B}$ originates in the different way or applying a Galilean transformation to a Lorentz (contravariant or covariant) tensor.

In fact, a rigid body rotation gives no barrier between Galilean and Lorentz invariance properties.
It is the boost $\vec u$ that gives us the trouble.
So, in the following discussion we will only consider the pure Galilean boost, $\mathscr G(\mathbb I, -\vec ut)$.
Our derivation of the Galilean invariant form of $\vec B$ starts with the Lorentz boost
\begin{equation}\label{lboost}
{{\Lambda_{\rm b}}^\mu}_\nu = 
\begin{pmatrix}
\gamma_u & -\gamma_u\vec u^T/c \\
-\gamma_u\vec u/c & \quad\mathbb I + (\gamma_u-1)\vec u\vec u^T/u^2
\end{pmatrix},
\end{equation}
where $\gamma_u = 1/\sqrt{1-u^2/c^2}$ is the \emph{Lorentz factor} of speed $u$.
Here, $\vec u\vec u^T$ can also be written as $\vec u\otimes\vec u$.
Clearly, in the semi-relativistic limit, $\gamma_u = 1$. 
The semi-relativistic Lorentz boost is now
\begin{equation}\label{lboost-semirel}
{{\Lambda_{\rm b-sr}}^\mu}_\nu = 
\begin{pmatrix}
1 & -\vec u^T/c \\
-\vec u/c & \quad\mathbb I
\end{pmatrix}
\end{equation}
Such transformation was called the ``extended Lorentz transformation'' by \citet{Hutter}.
Applying this transformation to the \emph{electromagnetic four-field tensor} gives
\begin{equation}\label{ofmunu}
 \mathscr F'^{\mu\nu} = {{\Lambda_{\rm nr}}^\mu}_{\alpha} {{\Lambda_{\rm nr}}^\nu}_{\beta} \mathscr F^{\alpha\beta}
\end{equation} 
Then recall the relationship between $\mathscr F$ and $\vec E$ and $\vec B$ \citep{Minkowski, Jackson},
\begin{equation}\label{four-field}
\mathscr F^{\mu\nu} = 
\begin{pmatrix}
0 & -\vec E^T/c \\
\vec E/c & \tens W(\vec B)
\end{pmatrix}.
\end{equation}
$\tens W(\vec a) : \mathbb R^3 \to \mathbb R^{3\times3}$ is the antisymmetric matrix with the skew axis $\vec a\in\mathbb R^3$.
It has the property that $\tens W(\vec a)\vec b = \vec a\times\vec b$ for any $\vec b\in\mathbb R^3$.
Eqn.~\eqref{ofmunu} gives $\vec E'  = \tens Q(\vec E + \vec u\times\vec B$) and $\vec B' = \tens Q\vec B$. 
Following the definition \eqref{ginv-2} and 
the fact that the velocity field transforms under $\mathscr G(\tens Q, -\tens Q\vec ut)$ according to $\vec v' = \tens Q(\vec v - \vec u)$, 
the Galilean invariant forms are obtained as $\gi{\vec E} = \vec E + \vec v\times\vec B$ and $\gi{\vec B} = \vec B$.
Applying the transformation \eqref{lboost-semirel} to the \emph{electromagnetic four-field tensor} gives
\begin{equation}\label{ofmunu}
 \mathscr F'^{\mu\nu} = {{\Lambda_{\rm b-sr}}^\mu}_{\alpha} {{\Lambda_{\rm b-sr}}^\nu}_{\beta} \mathscr F^{\alpha\beta}
\end{equation} 
Then recall the relationship between $\mathscr F$ and $\vec E$ and $\vec B$ \citep{Minkowski, Jackson},
\begin{equation}\label{four-field}
\mathscr F^{\mu\nu} = 
\begin{pmatrix}
0 & -\vec E^T/c \\
\vec E/c & \tens W(\vec B)
\end{pmatrix}.
\end{equation}
$\tens W(\vec a) : \mathbb R^3 \to \mathbb R^{3\times3}$ is the antisymmetric matrix with the skew axis $\vec a\in\mathbb R^3$.
It has the property that $\tens W(\vec a)\vec b = \vec a\times\vec b$ for any $\vec b\in\mathbb R^3$.
Eqn.~\eqref{ofmunu} gives $\vec E'  = \tens Q(\vec E + \vec u\times\vec B$) and $\vec B' = \tens Q\vec B$. 
Following the definition \eqref{ginv-2} and 
the fact that the velocity field transforms under $\mathscr G(\tens Q, -\tens Q\vec ut)$ according to $\vec v' = \tens Q(\vec v - \vec u)$, 
the Galilean invariant forms are obtained as $\gi{\vec E} = \vec E + \vec v\times\vec B$ and $\gi{\vec B} = \vec B - \vec v\times\vec E/c^2$.

\begin{remark}
An equivalent way of seeing this is starting with the transformed $\vec E$ and $\vec B$ by a fully relativistic
Lorentz boost, Eqn.~\eqref{lboost}. The results are \citep[Eqn.~(11.149)]{Jackson}
\[
\begin{array}{l}
\vec E' = \gamma_u\left(\vec E + \vec u\times\vec B\right) - \left.\gamma_u^2\vec u(\vec u\cdot\vec E)\right/c^2(\gamma_u+1) \\
\vec B' = \gamma_u\left(\vec B - \vec u\times\vec E/c^2\right) - \left.\gamma_u^2\vec u(\vec u\cdot\vec B)\right/c^2(\gamma_u+1).
\end{array}
\]
Then applying the semi-relativistic approximation, in particular $[\vec u(\vec u\cdot\vec E)/c^2] \lesssim \epsilon^2[\vec E]$, yields the same expressions as above.
\end{remark}

\begin{remark}
Applying the transformation \eqref{lboost-semirel} to an event $\mathscr X^\mu = \{ct; \vec x\}$ in the Minkowski spacetime, we get
\[
\begin{pmatrix}
ct' \\ \vec x'
\end{pmatrix} = X'^\mu = {{\Lambda_{\rm b-sr}}^\mu}_\nu \mathscr X^\nu = 
\begin{pmatrix}
1 & -\vec u^T/c \\
-\vec u/c & \quad\mathbb I
\end{pmatrix}
\begin{pmatrix}
ct \\ \vec x
\end{pmatrix}
=\begin{pmatrix}
ct - \vec u^T\vec x/c \\
\vec x - \vec u t
\end{pmatrix}
\]
Here, $\vec u^T\vec x = \vec u\cdot\vec x$. 
Using semi-relativistic limit conditions \eqref{I} and \eqref{III},
we have $[\vec u\cdot\vec x/c] \lesssim \epsilon^2 [c t]$.
Hence, we have proven that the transformation \eqref{lboost-semirel} is equivalent to
the Galilean boost $\mathscr G(\mathbb I, -\vec ut)$ within the semi-relativistic limit.
\end{remark}

Now, we introduce the procedure of obtaining the Galilean invariant forms of a general Lorentz tensor (of rank 1 or 2).
\begin{proposition}\label{prop}
Let the four dimensional vector $g^\mu$ and the four dimensional matrix $\mathscr G^{\mu\nu}$ such that
\[
g^\mu = \begin{pmatrix}g^0\\\vec g^1\end{pmatrix},\qquad
\mathscr G^{\mu\nu} = 
\begin{pmatrix}
G^0 & (\vec G^1)^T \\ 
\vec G^2 & \tens G^3
\end{pmatrix},
\]
for some $g^0,G^0\in\mathbb R$, $\vec g^1, \vec G^1$ and $\vec G^2\in\mathbb R^3$, $\tens G^3\in\mathbb R^{3\times 3}$,
be Lorentz invariant (contravariant) tensors. Then
\begin{equation}\label{gformvec}
\begin{cases}
\gi{g^0} \equiv g^0 - \vec v\cdot\vec g^1/c^2 \\
\gi{\vec g^1} \equiv \vec g^1 - g^0\vec v
\end{cases}
\end{equation}
\begin{equation}\label{gformtens}
\begin{cases}
\gi{G^0} \equiv G^0 - \vec G^1\cdot\vec v/c - \vec v\cdot\vec G^2/c + \vec v\cdot\tens G^3\vec v/c^2 \\
\gi{\vec G^1} \equiv \vec G^1 - G^0\vec v/c + (\vec v\cdot\vec G^2)\vec v/c^2 - \tens G^3\vec v/c \\
\gi{\vec G^2} \equiv \vec G^2 - G^0\vec v/c + (\vec v\cdot\vec G^1)\vec v/c^2 - \tens G^3\vec v/c \\
\gi{\tens G^3} \equiv \tens G^3 + G^0\vec v\otimes\vec v/c^2 - \vec v\otimes\vec G^1/c -\vec G^2\otimes\vec v/c 
\end{cases}
\end{equation}
are Galilean invariant within the semi-relativistic limit. 
\end{proposition}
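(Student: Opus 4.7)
The plan is to recognize the expressions \eqref{gformvec}--\eqref{gformtens} as the components that $g^\mu$ and $\mathscr G^{\mu\nu}$ acquire after boosting to the particle's comoving frame by the semi-relativistic Lorentz transformation \eqref{lboost-semirel}, with the particle velocity $\vec v$ itself playing the role of the boost parameter. Writing $L(\vec u)\equiv{{\Lambda_{\rm b-sr}}^\mu}_{\nu}(\vec u)$, I would verify this identification by direct block-matrix calculation of $L(\vec v)\,\mathscr G\,L(\vec v)^{T}$: the resulting $(0,0)$, $(0,i)$, $(i,0)$, and $(i,j)$ blocks reproduce $\gi{G^0}$, $\gi{\vec G^1}$, $\gi{\vec G^2}$, $\gi{\tens G^3}$ respectively. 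The four-vector case $L(\vec v)\,g$ is analogous, provided the time slot of $g^\mu$ is read with its customary factor of $c$ (as in $\mathcal X^\mu=\{ct;\vec x\}$), so that the identification is consistent with the $\gi{\vec E}$, $\gi{\vec B}$ derivation carried out for $\mathscr F^{\mu\nu}$ just before the proposition.

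Given this identification, Galilean invariance becomes a statement about composing semi-relativistic boosts. Under a pure Galilean boost $\mathscr G(\mathbb I,-\vec ut)$, the Lorentz tensor components transform by $L(\vec u)$ (invoking the extended-Lorentz-transformation equivalence already established in the note), while the particle velocity transforms as $\vec v\mapsto\vec v-\vec u$. Consequently, the transformed invariants are
\[
(\gi{g^\mu})' \;=\; L(\vec v-\vec u)\,L(\vec u)\,g,\qquad (\gi{\mathscr G^{\mu\nu}})' \;=\; L(\vec v-\vec u)\,L(\vec u)\,\mathscr G\,L(\vec u)^{T}\,L(\vec v-\vec u)^{T}.
\]
The technical heart of the proof is then the composition lemma
\[
L(\vec u_1)\,L(\vec u_2) \;=\; L(\vec u_1+\vec u_2) + \mathcal{O}(\epsilon^2),
\]
obtained by direct multiplication of two copies of \eqref{lboost-semirel}: the only non-additive residuals are the scalar $\vec u_1\!\cdot\vec u_2/c^2$ in the time-time block and the dyad $\vec u_1\otimes\vec u_2/c^2$ in the spatial block, both second-order small by Definition~\ref{def:semi-rel}\eqref{I}. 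Applied with $\vec u_1=\vec v-\vec u$ and $\vec u_2=\vec u$, the composition collapses to $L(\vec v)$, giving $(\gi{g^\mu})'=L(\vec v)\,g=\gi{g^\mu}$ and similarly $(\gi{\mathscr G^{\mu\nu}})'=\gi{\mathscr G^{\mu\nu}}$ modulo $\mathcal{O}(\epsilon^2)$, which is exactly the invariance requirement \eqref{ginv-1}--\eqref{ginv-3} for pure boosts.

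Extending to a general $\mathscr G(\tens Q,\vec c)$ is routine: translations act trivially on both the tensors and on $\vec v$, while a spatial rotation $\tens Q$ embeds in the Lorentz group as the block-diagonal $\mathrm{diag}(1,\tens Q)$, which satisfies $L(\tens Q\vec u)\,\mathrm{diag}(1,\tens Q)=\mathrm{diag}(1,\tens Q)\,L(\vec u)$; this pushes $\tens Q$ through the rest-frame boost cleanly and delivers the expected $\tens Q\gi{\vec g^1}$ and $\tens Q\gi{\tens G^3}\tens Q^{T}$ transformation laws.

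The main obstacle I anticipate is not conceptual but the careful bookkeeping of the $\mathcal{O}(\epsilon^2)$ residuals produced by each of the four tensor components in \eqref{gformtens}. Certifying that none of them exceeds the semi-relativistic tolerance requires a tacit but benign scaling assumption $[G^0]\sim[\vec G^1]\sim[\vec G^2]\sim[\tens G^3]$ (and $[g^0]\sim[\vec g^1]$) compatible with Definition~\ref{def:semi-rel}; under that hypothesis every residual reduces to a multiple of $|\vec u|^2/c^2$ or $|\vec u\cdot\vec v|/c^2$ and can be discarded.
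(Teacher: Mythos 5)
Your proposal is correct and is essentially the paper's own argument: the paper likewise lets the semi-relativistic boost \eqref{lboost-semirel} act on $g^\mu$ and $\mathscr G^{\mu\nu}$ and then invokes $\vec v\mapsto\vec v'=\vec v-\vec u$, so your composition lemma $L(\vec u_1)L(\vec u_2)=L(\vec u_1+\vec u_2)+\mathcal O(\epsilon^2)$ is a tidier packaging of that same verification, and your explicit handling of rotations and of the factor of $c$ in the time slot are harmless refinements rather than departures.

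One correction to your closing paragraph: the scaling hypothesis $[G^0]\sim[\vec G^1]\sim[\vec G^2]\sim[\tens G^3]$ (and $[g^0]\sim[\vec g^1]$) is not needed, and you should not impose it. Writing $R=L(\vec v-\vec u)L(\vec u)-L(\vec v)$, the discarded terms are the blocks of $Rg$, $R\,\mathscr G\,L(\vec v)^T$, $L(\vec v)\,\mathscr G\,R^T$ and $R\,\mathscr G\,R^T$; inspecting these block by block, every residual is $\mathcal O(\epsilon^2)$ times a contribution from the \emph{same} component of $g^\mu$ or $\mathscr G^{\mu\nu}$ that the corresponding starred expression in \eqref{gformvec}--\eqref{gformtens} already retains. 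For instance, the $(0,0)$ residuals are of order $\epsilon^2[G^0]$, $\epsilon^3[\vec G^1]$, $\epsilon^3[\vec G^2]$, while $\gi{G^0}$ keeps $G^0$, $\vec G^1\cdot\vec v/c$ and $\vec v\cdot\vec G^2/c$; the spatial-block residuals are of order $\epsilon^2[\tens G^3]$, $\epsilon^3[\vec G^1]$, $\epsilon^3[\vec G^2]$ against retained terms of order $[\tens G^3]$, $\epsilon[\vec G^1]$, $\epsilon[\vec G^2]$. Hence the residuals are negligible regardless of how the components compare with one another. This matters for the intended strength of the proposition: the paper deliberately refuses to assume comparable component scales (that is why terms like $\vec v\cdot\tens G^3\vec v/c^2$ are kept in \eqref{gformtens}), and it later applies the proposition precisely in regimes of disparate scales, namely the strong magnetic and electric limits where, e.g., $c[P]\lesssim\epsilon[M]$; under your added hypothesis the proof would not cover those uses, so drop it and argue component by component as above.
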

\begin{proof}
Because $g^\mu$ and $\mathscr G^{\mu\nu}$ are Lorentz invariant, the semi-relativistic Lorentz boost can be
directly applied to them and yields
\[
g'^\mu = {{\Lambda_{\rm b-sr}}^\mu}_\nu g^\nu = 
\begin{pmatrix}
1 & -\vec u^T/c \\
-\vec u/c & \quad\mathbb I
\end{pmatrix}
\begin{pmatrix}g^0\\\vec g^1\end{pmatrix}
=
\begin{pmatrix}
g^0 - \vec u^T\vec g^1/c^2 \\
\vec g^1 - g^0\vec u
\end{pmatrix},
\]
\[
\begin{split}
\mathscr G'^{\mu\nu} &= {{\Lambda_{\rm b-sr}}^\mu}_\alpha{{\Lambda_{\rm b-sr}}^\mu}_\beta\mathscr G^{\alpha\beta}
= \begin{pmatrix}
1 & -\vec u^T/c \\
-\vec u/c & \quad\mathbb I
\end{pmatrix}
\begin{pmatrix}
G^0 & (\vec G^1)^T \\ 
\vec G^2 & \tens G^3
\end{pmatrix}
\begin{pmatrix}
1 & -\vec u^T/c \\
-\vec u/c & \quad\mathbb I
\end{pmatrix}\\
& = \begin{pmatrix}
G^0 - (\vec G^1)^T\vec u/c - \vec u^T\vec G^2/c + \vec u^T\tens G^3\vec u/c^2 &
\quad (\vec G^1)^T - G^0\vec u^T/c + \vec u^T\vec G^2\vec u^T/c^2 - \vec u^T\tens G^3/c\\
\vec G^2 - G^0\vec u/c + \vec u(\vec G^1)^T\vec u/c^2 - \tens G^3\vec u/c & 
\quad \tens G^3 + G^0\vec u\vec u^T/c^2 - \vec u(\vec G^1)^T/c -\vec G^2\vec u^T/c 
\end{pmatrix}.
\end{split}
\]
Then using the fact that under the same Galilean boost, $\vec v \mapsto \vec v' = \vec v - \vec u$, we are done. 
\end{proof}

We name these asterisked variables the \emph{Galilean invariant forms}, or simply the \emph{Galilean invariants}, of the corresponding components of the Lorentz vector $g^\mu$ and the Lorentz tensor $\mathscr G^{\mu\nu}$. This terminology is borrowed from \citet{Kovetz}, although our forms are slightly different to his.

The Galilean invariants corresponding to the components of the \emph{covariant} counterpart of $g^\mu$ and $\mathscr G^{\mu\nu}$, 
denoted 
\[
g_\mu = \begin{pmatrix} g_0&~\vec g_1\end{pmatrix}\quad\text{and}\quad
\mathscr G_{\mu\nu} = \begin{pmatrix}
G_0 & \vec G_1^T \\ 
\vec G_2 & \tens G_3
\end{pmatrix}
\]
respectively, can be obtained by using the relations $g_0 = g^0$,  $\vec g_1 = -\vec g^1$, $G_0 = G^0$, $\vec G_{1,2} = - \vec G^{1,2}$, and $\tens G_3 = \tens G^3$
as a consequence of the Minkowski metric, Eqn.~\eqref{mmetric}. An important application of this covariant version of Proposition~\ref{prop} is 
to get the Galilean invariant forms of the covariant four-gradient operator
$\partial_\mu = (\partial_t/c, \nabla^T)$, where $\nabla^T = (\partial_1, \partial_2, \partial_3)$ is the gradient operator in $\mathbb R^3$.
Using Eqn.~\eqref{gformvec} and the aforementioned relations, in addition to the semi-relativistic limit condition \eqref{IV}, 
we have
\begin{equation}\label{gform-grad}
\begin{array}{l}
\gi{\partial_t} = \partial_t + \vec v\cdot\nabla, \\
\gi{\nabla} = \nabla.
\end{array}
\end{equation}
Now, we can restate the condition~\eqref{IV} as ``in the semi-relativistic limit, the Lorentz covariant four-gradient operator is \emph{ultra space-like}'', {i.e.} the spatial component is much larger than the time component.

Another derivative operator which has been found to be useful in the Galilean invariant formulation of non-relativistic (and semi-relativistic) electromagnetism is the \emph{flux derivative} \citep{Hutter, Kovetz, Ste2009}, which is defined as
\begin{equation}\label{flux-d}
\grave{\vec a} = \left(\partial_t + \vec v\cdot\nabla\right)\vec a + (\nabla\cdot\vec v)\vec a - \vec a\cdot\nabla\vec v
\end{equation}
for a vector field $\vec a\in\mathbb R^3$ with sufficient regularity. Since all components of the gradient of $\vec v$ are Galilean invariant,
the flux derivative is a Galilean invariant operator. It has the property that for an arbitrary sufficiently regular surface $\mathscr S\in\mathbb R^3$,
\[
\int_{\mathscr S}\grave{\vec a}\cdot\vec n ds = \dfrac{d}{dt}\int_{\mathscr S}\vec a\cdot\vec nds.
\]
Here, $\vec n$ is the unit normal of $\mathscr S$.

Clearly, a \emph{volume derivative}, defined as
\begin{equation}\label{vol-d}
\check{\vec a} = \left(\partial_t + \vec v\cdot\nabla + \nabla\cdot\vec v\right)\vec a
\end{equation}
for a vector field $\vec a\in\mathbb R^3$ with sufficient regularity, is also Galilean invariant. 
It has the property that for an arbitrary open bounded domain $\mathscr B\in\mathbb R^3$,
\[
\int_{\mathscr B}\check{\vec a}dv = \dfrac{d}{dt}\int_{\mathscr B} \vec adv.
\]
 
\begin{remark}
We cannot cancel terms like $\vec v\cdot\tens G_3\vec v/c^2$ in \eqref{gformvec} and \eqref{gformtens} right away because the relative sizes between the components such as $G_0$ and $\tens G_3$ are unknown.
\end{remark}
\begin{remark}
From the condition \eqref{V}, we can see that for a field $f$, $[\grave f]\sim[\check f]\sim[\partial_t f]$.
\end{remark}

\section{Galilean invariants in electromagnetism}

In this section, we are going to show that the classical electromagnetism can be reformulated to be Galilean invariant within the semi-relativistic limit, using the concept of Galilean invariants and the procedure described in Proposition~\ref{prop}.
Through the resulting formulation we will see that the Galilean invariants introduced in Sect.~\ref{sec:gi} are more proper choices of 
independent variables than the original field variables to be used in the theory of electromagnetic continua.

\subsection{Galilean invariant variables}\label{sec:gi}
In (special relativistic) classical electromagnetism \citep{Lorentz, Landau_Field, Landau_Continuous, Jackson}, all fields and sources can be represented by 
Lorentz invariant four-vectors and four-tensors whose Galilean invariants can be obtained by applying Proposition~\ref{prop}.
We summarize them in Table~\ref{tbl:li-vec-tens}.
\begin{table}[h]
\centering
\scriptsize
\caption{Lorentz invariant (contravarient) four-vectors and four-tensors in electromagnetism and their Galilean invariants}\label{tbl:li-vec-tens}
\begin{threeparttable}[b]
\begin{tabular}{rclcl}
\hline\hline
&4-D variables & descriptions & block matrix representations &  Galilean invariants\\
\hline \vspace{-0.5em}\\
1)&$\mathscr A^\mu$ & four-potential & $\{\phi/c;~\vec A\}$ & 
$\begin{array}{l} \gi{\phi} \equiv \phi - \vec v\cdot\vec A \\ \gi{\vec A} \equiv \vec A - \phi\vec v/c^2\end{array}$
\vspace{0.5em}\\
2)&$\mathscr J^\mu$ & four-current & $\{c\rho;~\vec J\}$ & 
$\begin{array}{l} \gi{\rho} \equiv \rho - \vec v\cdot\vec J/c^2 \\ \gi{\vec J} \equiv \vec J - \rho\vec v\end{array}$
\vspace{0.5em}\\
3)&$\mathscr J_{\rm f}^\mu$ & free four-current & $\{c\rho_{\rm f};~\vec J_{\rm f}\}$ & 
$\begin{array}{l} 
\gi{\rho_{\rm f}} \equiv \phi_{\rm f} - \vec v\cdot\vec J_{\rm f}/c^2 \\ 
\gi{\vec J_{\rm f}} \equiv \vec J_{\rm f} - \rho_{\rm f}\vec v
\end{array}$\vspace{0.5em}\\
4)&$\mathscr F^{\mu\nu}$ & field strength tensor & $\begin{pmatrix} 0 & -\vec E^T/c \\ \vec E/c & \tens W(\vec B) \end{pmatrix}$ & 
$\begin{array}{l} \gi{\vec E} \equiv \vec E + \vec v\times\vec B \\ \gi{\vec B} \equiv \vec B - \vec v\times\vec E/c^2\end{array}$ 
\vspace{0.5em}\\
5)&$\mathscr D^{\mu\nu}$ & free field strength& $\begin{pmatrix} 0 & -c\vec D^T \\ c\vec D & \tens W(\vec H) \end{pmatrix}$ &
$\begin{array}{l} \gi{\vec D} \equiv \vec D + \vec v\times\vec H/c^2 \\ \gi{\vec H} \equiv \vec H - \vec v\times\vec D\end{array}$
\vspace{0.5em}\\
6)&$\mathscr M^{\mu\nu}$ & bounded field strength & $\begin{pmatrix} 0 & c\vec P^T \\ -c\vec P & \tens W(\vec M) \end{pmatrix}$ &
$\begin{array}{l} \gi{\vec P} \equiv \vec P - \vec v\times\vec M/c^2 \\ \gi{\vec M} \equiv \vec M + \vec v\times\vec P\end{array}$
\vspace{0.5em}\\\hline \vspace{-0.5em}\\
7)&$f^\mu$ & four-force & $\{h_{\rm J}/c;~\vec f_{\rm L}\}$\tnote{[1]} & 
$\begin{array}{l} 
\gi{h_{\rm J}} \equiv h_{\rm J} - \vec v\cdot\vec f_{\rm L} \\ 
\gi{\vec f_{\rm L}} \equiv \vec f_{\rm L} - h_{\rm J}\vec v/c^2
\end{array}$
\vspace{0.5em}\\
8)&$\mathscr T^{\mu\nu}$ & energy momentum tensor & $\begin{pmatrix} w & \vec S^T/c \\ \vec S/c & -\tens T \end{pmatrix}$\tnote{[2]} &
$\begin{array}{l} 
\gi{w} \equiv w - 2\vec S\cdot\vec v/c^2 \\ 
\gi{\vec S} \equiv \vec S - w\vec v + \tens T\vec v\\
\gi{\tens T} \equiv \tens T + (\vec v\otimes\vec S + \vec S\otimes\vec v)/c^2
\end{array}$
\vspace{0.5em}\\
\hline\hline
\end{tabular}
\begin{tablenotes}
\item [{[1]}] $h_{\rm J} = \vec J\cdot\vec E$ is the \emph{Joule heating}, and $\vec f_{\rm L} = \rho\vec E + \vec J\times\vec B$ is the \emph{Lorentz force}.
\item [{[2]}] $w = \epsilon_0|\vec E|^2/2 + |\vec B|^2/2\mu_0$ is the \emph{field energy density}, $\vec S = \vec E\times\vec B/\mu_0$ is the \emph{Poynting vector},
and $\tens T = \epsilon_0\vec E\otimes \vec E + \vec B\otimes\vec B/\mu_0 - w \mathbb I$ is the \emph{Maxwell stress}.
\end{tablenotes}
\end{threeparttable}
\end{table}

\begin{remark}
{The Galilean invariants in all but the last two rows of Table~\ref{tbl:li-vec-tens} are similar to those in the literature \citep{Hutter, Kovetz, Ericksen2007, Ste2009} except the terms containing $c^{-2}$-factor. This is because Table~\ref{tbl:li-vec-tens} is based on the semi- rather than non-relativisitic limit}.
\end{remark}
\begin{remark}
The Galilean invariants in the last two rows of Table~\ref{tbl:li-vec-tens} are can be recovered by replacing all variables in their definitions (see the table notes) with the corresponding Galilean invariants given in the first six rows, and applying the semi-relativistic approximation. 
{\it e.g.} $\gi{h_J} = \gi{\vec J}\cdot\gi{\vec E}$, and $\gi{\tens T} = \epsilon_0\gi{\vec E}\otimes \gi{\vec E} +\gi{\vec B}\otimes\gi{\vec B}/\mu_0 - \gi{w} \mathbb I$. 
\end{remark}

\begin{remark}
Although electromagnetism is based on the fields in the top six rows of Table~\ref{tbl:li-vec-tens}, what really plays important roles in 
the thermomechanical theory are the last two rows.
\end{remark}

\subsection{Galilean invariant formulation}

The classical electromagnetism can be formulated completely be Lorentz invariant fields and operators \citep{Lorentz, Landau_Field, Landau_Continuous, Jackson}. 
The formulation has a 3D representation, including Maxwell's equations.
But it is non-objective under Galilean transformations of $\mathbb R^3$.
By algebraic manipulations, these 3D equations can be 
rewritten in terms of only the Galilean invariants obtained in Table~\ref{tbl:li-vec-tens}
and the Galilean invariant derivative operators \eqref{gform-grad} and \eqref{flux-d}.
The result is the Galilean invariant formulation of classical electromagnetism within the semi-relativistic limit,
which is summarized in Table~\ref{tbl:lc-form}.

\begin{table}[h]
\centering
\scriptsize
\caption{\small 4D Lorentz covariant electromagnetism and its 3D Galilean invariant formulation}\label{tbl:lc-form}
\begin{threeparttable}[b]
\begin{tabular}{rlllc}
\hline\hline
&\multicolumn{1}{c}{Lorentz invariant 4D} & \multicolumn{1}{c}{non-objective 3D} &  \multicolumn{2}{c}{Galilean invariant 3D formulation}\\
&\multicolumn{1}{c}{formulation}&\multicolumn{1}{c}{formulation}&local form&{global form}\\
\hline \vspace{-0.5em}\\
1)  & $\mathscr F^{\mu\nu} = \partial^\mu\mathscr A^\nu - \partial^\nu\mathscr A^\mu$ &
$\begin{array}{l}
\vec E = -\nabla\phi - \partial_t\vec A \\
\vec B =  \nabla\times\vec A
\end{array}$ &
$\begin{array}{l}
\gi{\vec E} = -\nabla\gi{\phi} - (\gi{\partial_t} + \nabla\vec v)\gi{\vec A} \\
\gi{\vec B} = \nabla\times\gi{\vec A} + \gi{\phi}\nabla\times\vec v/c^2
\end{array}$ &
\vspace{0.5em}\\
2) & $\epsilon^{\alpha\beta\mu\nu}\partial_\beta\mathscr F_{\mu\nu} = 0$\tnote{[1]} &
$\begin{array}{l}
\nabla\cdot\vec B = 0 \\
\nabla\times\vec E = - \partial_t\vec B
\end{array}$ &
$\begin{array}{l}
\nabla\cdot\gi{\vec B} = 0 \\
\nabla\times\gi{\vec E} = - \grave{\gi{\vec B}}
\end{array}$ &
\eqref{global-maxwell-1}
\vspace{0.5em}\\
3) & $\partial_\mu\mathscr F^{\mu\nu} = \mu_0\mathscr J^\mu$ &
$\begin{array}{l}
\nabla\cdot\vec E = \rho /\epsilon_0 \\
\nabla\times\vec B = \mu_0\vec J + \partial_t\vec E/c^2
\end{array}$ &
$\begin{array}{l}
\nabla\cdot\gi{\vec E} = \gi{\rho}/\epsilon_0 \\
\nabla\times\gi{\vec B} = \mu_0\gi{\vec J} + \grave{\gi{\vec E}}/c^2
\end{array}$ &
\eqref{global-maxwell-2}
\vspace{0.5em}\\
4) & $\partial_\mu\mathscr D^{\mu\nu} = \mu_0\mathscr J_{\rm f}^\mu$ &
$\begin{array}{l}
\nabla\cdot\vec D = \rho_{\rm f} \\
\nabla\times\vec H = \mu_0\vec J_{\rm f} + \partial_t\vec H
\end{array}$ &
$\begin{array}{l}
\nabla\cdot\gi{\vec D} = \gi{\rho_{\rm f}} \\
\nabla\times\gi{\vec H} = \mu_0\gi{\vec J_{\rm f}} + \gi{\vec H}
\end{array}$ &
\eqref{global-maxwell-3}
\vspace{0.5em}\\
5) & $\mathscr F^{\mu\nu}/\mu_0 = \mathscr D^{\mu\nu} + \mathscr M^{\mu\nu}$ &
$\begin{array}{l}
\epsilon_0\vec E = \vec D -\vec P \\
\vec B/\mu_0 = \vec H + \vec M 
\end{array}$ &
$\begin{array}{l}
\epsilon_0\gi{\vec E} = \gi{\vec D} - \gi{\vec P} \\
\gi{\vec B}/\mu_0 = \gi{\vec H} + \gi{\vec M}
\end{array} $ &
\vspace{0.5em}\\\hline \vspace{-0.5em}\\
6) & $\partial_\nu\mathscr T^{\mu\nu} + \mathscr F_{\mu\nu}\mathscr J^\nu = 0$ &
$\begin{array}{l}
\partial_t w  + h_{\rm J} + \nabla\cdot\vec S = 0 \\
\partial_t\vec S/c^2 + \vec f_{\rm L} - \nabla\cdot\tens T =  0
\end{array}$& 
$\begin{array}{l}
\check{\gi{w}} + \gi{h_{\rm J}} + \nabla\cdot\gi{\vec S} = \gi{\tens T}\cdot\nabla\vec v \\
\check{\gi{\vec S}}/c^2 + \gi{\vec f_{\rm L}} - \nabla\cdot\gi{\tens T} = -\gi{\vec S}\cdot\nabla\vec v/c^2
\end{array}$ &
$\begin{array}{l}
\eqref{global-poynting}\\
\eqref{global-momentum}
\end{array}$
\vspace{0.5em}\\
\hline\hline
\end{tabular}
\begin{tablenotes}
\item [{[1]}] $\epsilon^{\alpha\beta\mu\nu}$ is the forth-order Levi-Civita permutation operator.
\end{tablenotes}
\end{threeparttable}
\end{table}

In the derivation of the Galilean invariant forms in Table~\ref{tbl:lc-form}, particularly those of Poynting's theorem and the momentum identity (the last row of Table~\ref{tbl:lc-form}), we used
the semi-relativistic limit condition~\eqref{II} and \eqref{IV} and Remark~\ref{rmk:1} to claim
\[
\left.\begin{array}{r}
{[\partial_t(f\cdot\vec v)/c^2]} \\
{[\vec v\cdot(\partial_tf)/c^2]}
\end{array}\right\}
\sim \dfrac{[v]}{c}\dfrac{[\partial_tf]}{c} \lesssim \epsilon^2 [\nabla f] \sim \epsilon^2 [\nabla\cdot f]
\]
for any semi-relativistic scalar, vector, or tensor field $f$.
Thus, because of the presence of $\nabla\cdot\vec S$, in the Galilean invariant Poynting's theorem, we have the approximation
$\tens T\cdot\nabla\vec v \approx \gi{\tens T}\cdot\nabla \vec v.$
Also, recall $[w]\sim[\tens T]$, the presence of $\nabla\cdot\tens T$ in the momentum identity enables us to neglect terms like 
$\partial_t(w\vec v)/c^2$, $\partial_t(\tens T\vec v)/c^2$ and $(\partial_tw)\vec v/c^2$ during the derivation.

For an arbitrary bounded open domain $\mathscr B\in\mathbb R^3$ whose boundary $\partial\mathscr B$ has
a sufficiently regular unit outer normal vector field $\vec n$, and for an arbitrary bounded surface $\mathscr S\in\mathbb R^3$ 
whose boundary $\partial\mathscr S$ has a sufficiently regular unit tangential vector field $\vec t$ such that $\mathscr S$ is always
on the left of $\vec t$, the following Galilean invariant identities hold:
\begin{align}
\label{global-maxwell-1}
&\int_{\partial\mathscr B}\gi{\vec B}\cdot\vec nds = 0, &\qquad
& \int_{\partial\mathscr S} \gi{\vec E}\cdot\vec tdl = -\dfrac{d}{dt}\int_{\mathscr S}\gi{\vec B}\cdot\vec nds,\\
\label{global-maxwell-2}
&\int_{\partial\mathscr B}\gi{\vec E}\cdot\vec nds = \int_{\mathscr B}\dfrac{\gi{\rho}}{\epsilon_0}dv, &\qquad
& \int_{\partial\mathscr S} \gi{\vec B}\cdot\vec tdl = \int_{\mathscr S}\mu_0\gi{\vec J}\cdot\vec nds-\dfrac{1}{c^2}\dfrac{d}{dt}\int_{\mathscr S}\gi{\vec E}\cdot\vec nds,\\
\label{global-maxwell-3}
&\int_{\partial\mathscr B}\gi{\vec D}\cdot\vec nds = \int_{\mathscr B}\gi{\rho_{\rm f}}dv,&\qquad
& \int_{\partial\mathscr S} \gi{\vec H}\cdot\vec tdl = \int_{\mathscr S}\gi{\vec J_{\rm f}}\cdot\vec nds-\dfrac{d}{dt}\int_{\mathscr S}\gi{\vec D}\cdot\vec nds,
\end{align}
\begin{align}
\label{global-poynting}
&\dfrac{d}{dt}\int_{\mathscr B} \gi{w} dv + \int_{\mathscr B}\gi{h_{\rm J}}dv 
+ \oint_{\partial\mathscr B}\gi{\vec S}\cdot\vec nds = \int_{\mathscr B}\gi{\tens T}\cdot\nabla\vec vdv, \\
\label{global-momentum}
&\dfrac{d}{dt}\int_{\mathscr B}\dfrac{\gi{\vec S}}{c^2} dv + \int_{\mathscr B}\gi{f_{\rm L}}  - \nabla\cdot\gi{\tens T} dv
=  - \int_{\mathscr B}\dfrac{\gi{\vec S}}{c^2}\cdot\nabla\vec vdv.
\end{align}

\begin{remark}
{The right hand sides of \eqref{global-poynting} and \eqref{global-momentum} are rarely mentioned in past literature}.
The former can be interpreted as a dissipation rate due to the (Galilean invariant) Maxwell stress.
The latter can be interpreted as a work due to the non-symmetry between the electromagnetic momentum $\gi{\vec S}/c^2$
and the mechanical momentum $\vec v$. Such non-symmetry may contribute to a non-symmetric Cauchy stress as discussed by \citet{Kovetz} and others.
\end{remark}

\begin{remark}
{
Although similar Galilean invariant Maxwell's equations have been introduced by many past works
\citep{Pen1967, Truesdell_Toupin, Maugin1988, Tiersten, Hutter, Kovetz, Ericksen2007, Ste2009}, 
the real goal of the present note is
to derive the Galilean invariant Poynting's theorem and the momentum identity in the last row of Table~\ref{tbl:lc-form}, as well
as their global forms \eqref{global-poynting} and \eqref{global-momentum}.
}
The former identity is closely related to the conservation of energy,
and the latter one is closely related to the balance of linear and angular momentum in the 
thermomechanical theory.
\end{remark}

\section{Stronger limits}\label{sec:limits}
As we mentioned earlier, the terms containing $c^{-2}$-factor cannot be brutally neglected, unless further information about 
the relative sizes between various field variables are known. In this section, we discuss some stronger limiting cases where
such relative sizes are partially known, and simplify the Galilean invariant formulation of electromagnetism in these stronger limits.

\subsection{Magnetic limits}
\begin{definition}
A semi-relativistic electromagnetic problem is said to be within the \emph{(weak) magnetic limit}, if 
\begin{equation}\label{lim-mag}
[E] \lesssim \epsilon c [B]
\end{equation}
The problem is said to be within the \emph{strong magnetic limit}, if
in addition to \eqref{lim-mag}, also
\begin{equation}\label{lim-mag-strong}
c[D]\lesssim\epsilon [H],\qquad
c[P]\lesssim\epsilon [M]
\end{equation}
\end{definition}

Some of the Galilean invariants listed in Table~\ref{tbl:li-vec-tens} can be reduced, when the magnetic limit is reached.
For example, $\gi{\vec B}=\vec B$ can be recovered. 
If it is the strong magnetic limit, we shall also have
$\gi{\vec H} = \vec H$ and $\gi{\vec M} = \vec M$, 
{which are different from the Galilean invariants
obtained in most past works \citep{Kovetz, Ericksen2007, Ste2009}, 
where the non-relativistic limit was used.
In past works, usually $\gi{\vec D}=\vec D$ and $\gi{\vec P} = \vec P$.}
But, from \eqref{lim-mag-strong} we can only see that 
${[v][H]/c^2}/{[D]} \gtrsim {[v]}/{c\epsilon} > {[v]}/{c}.$
This does not exclude the possibility that $[\vec v\times\vec H/c^2] > \epsilon^2 [\vec D]$, because $[v]/c > \epsilon^2$ is
allowed in the semi-relativistic limit.
Hence, $\gi{\vec D} = \vec D + \vec v\times\vec H/c^2$ cannot be further reduced with the magnetic limit.

We notice that according to the first row in Table~\ref{tbl:lc-form}, the magnetic limit implies 
\[
[\nabla\phi]\lesssim\epsilon c [\nabla A].
\]
According to the constrain \eqref{sr-cons} given in Definition~\ref{def:semi-rel}, above inequality is
equivalent to
\begin{equation}\label{lim-mag-2}
[\phi]\lesssim\epsilon c [A].
\end{equation}
This inequality can be treated as an alternative way of stating the magnetic limit.
Thus the second equation in the first row of Table~\ref{tbl:lc-form} is reduced to
\[
\gi{\vec B} = \nabla\times\gi{\vec A}.
\]
Eqn.~\eqref{global-maxwell-2}$_2$ and its local form given in Table~\ref{tbl:lc-form} are also reduced, because
\[
[\partial_t E]/c^2 \lesssim \epsilon [\partial_t B] / c \lesssim \epsilon^2 [\nabla B]
\]
within the magnetic limit. There are other equations can be reduced when the magnetic or the strong magnetic limit is reached.
All the reductions are listed in Table~\ref{tbl:reduction}.

\begin{table}[h]
\centering
\small
\caption{Reduction of Galilean invariant variables and equaitons within magnetic and electric limits}\label{tbl:reduction}
\renewcommand{\arraystretch}{1.3}
\begin{tabular}{l|c|c}
\hline\hline
& {\qquad}magnetic limit{\qquad} & {\qquad}electric limit{\qquad}\\
\hline
\multirow{4}{*}{(weak limit)}
& $\gi{\vec A} = \vec A$, $\gi{\vec B} = \vec B$, $\gi{\vec J} = \vec J$ & $\gi{\phi} = \phi$, $\gi{\vec E} = \vec E$, $\gi{\rho} = \rho$\\
& $\gi{\vec f_{\rm L}} = \vec f_{\rm L}$, $\gi{w} = w$, $\gi{\tens T} = \tens T$ & $\gi{h_{\rm J}} = h_{\rm J}$, $\gi{w} = w$, $\gi{\tens T} = \tens T$\\
& $\gi{\vec B} = \nabla\times\gi{\vec A}$ & $\gi{\vec E} = -\nabla\gi{\phi}$\\
& $\nabla\times\gi{\vec B} = \mu_0\gi{\vec J}$ & $\nabla\times\gi{\vec E} = 0$ \\
\hline
\multirow{2}{*}{strong limit}
& $\gi{\vec H} = \vec H$, $\gi{\vec M} = \vec M$, $\gi{\vec J_{\rm f}} = \vec J_{\rm f}$ & $\gi{\vec D} = \vec D$, $\gi{\vec P} = \vec P$, $\gi{\rho_{\rm f}} = \rho_{\rm f}$ \\
& $\nabla\times\gi{\vec H} = \gi{\vec J_{\rm f}}$ &  \\
\hline\hline
\end{tabular}
\renewcommand{\arraystretch}{1}
\end{table}

From \eqref{global-maxwell-2}$_1$ and the reduced form of \eqref{global-maxwell-2}$_2$, we have
$\epsilon_0[\nabla E]\sim[\rho]$ and $[\nabla B]\sim\mu_0[J]$, which yields
\begin{equation}\label{lim-mag-3}
c[\rho]\lesssim\epsilon[J]
\end{equation}
according to \eqref{lim-mag}.
Then we obtain the reduction $\gi{\vec J} = \vec J$, because $[v][\rho]\lesssim\epsilon^2[J]$.
Such reduction has been also found in other ``magnetic problems'', such as magnetohydrodynamics \citep[Sect.~61]{Kovetz}.
The inequality \eqref{lim-mag-3} can be treated as the third way of stating the magnetic limit.

\subsection{Electric limits}
\begin{definition}
A semi-relativistic electromagnetic problem is said to be within the \emph{(weak) electric limit}, if 
\begin{equation}\label{lim-ele}
c [B] \lesssim \epsilon [E]
\end{equation}
The problem is said to be within the \emph{strong electric limit}, if
in addition to \eqref{lim-mag}, also
\begin{equation}\label{lim-ele-strong}
[H]\lesssim\epsilon c[D],\qquad
[M]\lesssim\epsilon c[P]
\end{equation}
\end{definition}

Similarly, we have two alternative ways of stating the electric limit:
\begin{align}
c[A]\lesssim\epsilon [\phi],\\ 
[J]\lesssim\epsilon c [\rho].
\end{align}

The reduction of Galilean invariant variables and equations are listed in Table~\ref{tbl:reduction}.
\begin{remark}
The Galilean invariant Poynting's theorem and the momentum identity cannot be reduced in either limit.
\end{remark}

\begin{remark}
In fact, problems within the magnetic or electric limit have some multiscale features. 
For example, within the magnetic limit, the whole set of electric variables are much smaller than
the set of magnetic ones. That is why, as indicated by the reduced formulas of the Galilean invariants,
the former cannot affect the latter, while the latter can affect the former.
Thus, although we have the governing equations containing electric variables that none of the terms in the equations
is negligible, such as \eqref{global-maxwell-2}$_1$, the equations themselves are small as a whole. 
In other words, we do have the formulation of electric phenomena in a magnetic limit problem, but those phenomena
are actually happening in a much smaller scale compare to the scale of the dominating magnetic phenomena.
Only because the difference between two scales is less than $\epsilon^2$, 
the semi-relativistic formulation is capable of capturing both.
\end{remark}


\begin{thebibliography}{18}
\providecommand{\natexlab}[1]{#1}
\providecommand{\url}[1]{\texttt{#1}}
\expandafter\ifx\csname urlstyle\endcsname\relax
  \providecommand{\doi}[1]{doi: #1}\else
  \providecommand{\doi}{doi: \begingroup \urlstyle{rm}\Url}\fi

\bibitem[Hutter et~al.(2006)Hutter, van~de Ven, and Ursescu]{Hutter}
K.~Hutter, A.~A.~F. van~de Ven, and A.~Ursescu.
\newblock \emph{Electromagnetic Field Matter Interactions in Thermoelastic
  Solids and Viscous Fluids}, volume 710 of \emph{Lecture notes in physics}.
\newblock Springer-Verlag, Berlin, Heidelberg, 2006.

\bibitem[Kovetz(2000)]{Kovetz}
A.~Kovetz.
\newblock \emph{Electromagnetic Theory}.
\newblock Oxford University Press, Oxford, 2000.

\bibitem[Fosdick and Tang(2007)]{Fosdick}
R.~Fosdick and H.~Tang.
\newblock {Electrodynamics and Thermomechanics of Material Bodies}.
\newblock \emph{J. Elast.}, 88\penalty0 (3):\penalty0 255--297, 2007.

\bibitem[Ericksen(2005)]{Ericksen2005}
J.~L. Ericksen.
\newblock {A Modified Theory of Magnetic Effects in Elastic Materials}.
\newblock \emph{Math. Mech. Solids}, 11:\penalty0 23--47, 2005.

\bibitem[Ericksen(2007)]{Ericksen2007}
J.~L. Ericksen.
\newblock {On Formulating and Assessing Continuum Theories of Electromagnetic
  Fields in Elastic Materials}.
\newblock \emph{J. Elast.}, 87\penalty0 (2-3):\penalty0 95--108, 2007.

\bibitem[Steigmann(2009)]{Ste2009}
D.~J. Steigmann.
\newblock {On the Formulation of Balance Laws for Electromagnetic Continua}.
\newblock \emph{Math. Mech. Solids}, 14:\penalty0 390--402, 2009.

\bibitem[Liu(2013)]{Liu2013}
L.~P. Liu.
\newblock On energy formulations of electrostatics for continuum media.
\newblock \emph{Journal of the Mechanics and Physics of Solids}, 2013.

\bibitem[Toupin(2008)]{Toupin}
R.~A. Toupin.
\newblock {The elastic dielectric}.
\newblock \emph{J. Ration. Mech. Anal.}, 5\penalty0 (6):\penalty0 849--915,
  2008.

\bibitem[Brown(1966)]{Brown1966}
W.~F. Brown.
\newblock \emph{Magnetoelastic Interaction}, volume~9 of \emph{Springer tracts
  in natural philosophy}.
\newblock Springer-Verlag, New York, 1966.

\bibitem[Penfield and Haus(1967)]{Pen1967}
P.~Penfield and H.~A. Haus.
\newblock \emph{Electrodynamics of moving media}.
\newblock Number~14 in Special technical report (Massachusetts Institute of
  Technology. Research Laboratory of Electronics). M.I.T. Press, Cambridge,
  Massachusetts, 1967.

\bibitem[Truesdell and Toupin(1960)]{Truesdell_Toupin}
C.~Truesdell and R.~A. Toupin.
\newblock \emph{The Classical Field Theories}, volume III/1 of \emph{Handbuch
  Der Physik}.
\newblock Springer-Verlag, Berlin, G\"ottingen, Heidelberg, 1960.

\bibitem[Maugin(1988)]{Maugin1988}
G.~A. Maugin.
\newblock \emph{Continuum Mechanics of Electromagnetic Solids}, volume~33 of
  \emph{Horth-Holland series in applied mathematics and mechanics}.
\newblock Springer-Verlag, Amsterdam, 1988.

\bibitem[Tiersten(1990)]{Tiersten}
H.~F. Tiersten.
\newblock \emph{A development of the equations of electromagnetism in material
  continua}.
\newblock Springer-Verlag, 1990.

\bibitem[Lorentz(1904)]{Lorentz}
H.~A. Lorentz.
\newblock Electromagnetic phenomena in a system moving with any velocity
  smaller than that of light.
\newblock \emph{Proc. Acad. Science Amsterdam}, 6:\penalty0 809--831, 1904.

\bibitem[Landau and Lifshits(1984)]{Landau_Continuous}
L.~D. Landau and E.~M. Lifshits.
\newblock \emph{{Electrodynamics of Continuous Media}}.
\newblock Pergamon, Oxford; New York, 2nd rev. english edition, 1984.

\bibitem[Minkowski(1909)]{Minkowski}
H.~Minkowski.
\newblock Raum und zeit.
\newblock \emph{Jahresbericht der Deutschen Mathematiker-Vereinigung},
  18:\penalty0 75--88, 1909.

\bibitem[Jackson(1998)]{Jackson}
J.~D. Jackson.
\newblock \emph{{Classical Electrodynamics}}.
\newblock Wiley, New York, third edition, 1998.

\bibitem[Landau and Lifshits(1975)]{Landau_Field}
L.~D. Landau and E.~M. Lifshits.
\newblock \emph{{The Classical Theory of Fields}}.
\newblock Pergamon, Oxford; New York, 4th rev. english edition, 1975.

\end{thebibliography}

\end{document}